\def\a{\AxiomC}
\def\u{\UnaryInfC}
\def\b{\BinaryInfC}
\newtheorem{definition}{Definition}
\newtheorem{lemma}{Lemma}
\newtheorem{theorem}{Theorem}
\newtheorem{remark}[theorem]{Remark} 
\newtheorem{proposition}{Proposition}
\title{Proof-graphs for Minimal Implicational Logic}
\author{Marcela Quispe-Cruz \institute{Inform\'atica PUC-Rio, Rio de Janeiro, Brazil}
\and Edward Hermann Haeusler \institute{Inform\'atica PUC-Rio, Rio de Janeiro, Brazil}
\and Lew Gordeev \institute{T\"{u}bingen\medskip\ University, Ghent University, PUC-Rio}
}
\begin{document}
\maketitle

\begin{abstract} 
It is well-known that the size of propositional classical proofs can  be huge. 
Proof theoretical studies discovered exponential gaps between  normal or cut free 
proofs and their respective non-normal proofs. The  aim of this work is to study 
how to reduce the weight of propositional deductions. We present the formalism of 
\emph{proof-graphs} for purely implicational logic, which are graphs 
of a specific shape that are  intended to capture the logical structure of a deduction. 
The advantage of this formalism is that formulas can be shared in the reduced proof.

In the present paper we give a precise definition of proof-graphs for the  minimal
implicational logic, together with a normalization procedure for  these proof-graphs. 
In contrast to standard tree-like formalisms, our normalization does not increase the 
number of nodes, when applied to the corresponding minimal proof-graph representations. 
\end{abstract}

\section{Introduction}

The use of proof-graphs, 
instead of trees or lists, for representing proofs is getting popular among 
proof-theoreticians. Proof-graphs serve as a way to provide a better symmetry to the 
semantics of proofs \cite{deOliveira2003} and a way to study 
complexity of propositional proofs and to provide more efficient theorem 
provers, concerning size of propositional proofs. In \cite{BonetB1993}, 
one can find a complexity analysis of the size of Frege systems, 
Natural Deduction systems and Sequent Calculus concerning their tree-like and 
list-like representation. This leads to $O(nlog(n))$
improvement in the size of the list-based proofs compared to tree-like proofs, 
which is based on the observation that the hypotheses occur only
once in the lists and more than once in the trees. Thus sharing formulas helps 
to reduce the size of proofs. There are related works, e.g. \cite{Alves2011}, 
that use graphs for representing proofs, pointing out that proof-graphs offer 
a better way to facilitate the visualisation and understanding of proofs in 
the underlying logic.

On the other hand \cite{Finger2005}, \cite{Vaston07} and \cite{Gordeev2009}
show that the use of Directed Acyclic Graphs (DAGs) together with mechanisms
of unification/substitution in proof representations has
compacting/compressing factor equivalent to cut-introduction. And,
obviously, graphs can save space by means of reference, instead of plain
copying. This paper shows yet another advantage of using graphs for
representing proofs. We show that using ``mixed'' graph representations of
formulas and inferences in Natural Deduction in the purely implicational
minimal logic one can obtain a (weak) normalization theorem that, in fact,
is a strong normalization theorem. Moreover the corresponding normalization
procedure does not exceed the size of the input, which sharply contrasts to
the well-known exponential speed-up of standard normalization. The choice of
purely implicational minimal logic ($M^{\rightarrow }$) is motivated by the
fact that the computational complexity of the validity of $M^{\rightarrow }$
is PSPACE-complete and can polynomially simulate classical, intuitionistic
and full minimal logic \cite{Statman79} as well as any propositional
logic with a Natural Deduction system satisfying the subformula property  
\cite{Haeusler2013}.

In a more general context, this work has been conducted as part of a bigger
tree-to-graph proof compressing research project. The purpose of such proof
compression is:

\begin{enumerate}
\item  To construct small (if possible, minimal) graph-like representations
of standard tree-like proofs in a given proof system and -- in the
propositional case -- investigate the corresponding short graph-like theorem
provers.

\item  To find short (say, polynomial-size) graph-like analogous of
standard tree-like proof theoretic operations like e.g. normalization in
Natural Deduction and/or cut-elimination in Sequent Calculus.
\end{enumerate}

Note that the present work fulfills both conditions with regard to the mimp-graph
representation (see below) of chosen Natural Deduction and the corresponding
notion of formula-minimality (see Theorems \ref{theo1} and \ref{theo2}).

Back to the proof normalization, recall the following properties of a given
structural deductive system (Natural Deduction, Sequent Calculus, etc):

\begin{itemize}
\item Normal form: To each derivation of $\alpha$ from $\Delta$ there is 
a normal derivation of $\alpha$ from $\Delta^{\prime}\subseteq\Delta$.
\item Normalization: To each derivation of $\alpha$ from $\Delta$ there 
is a normal derivation of $\alpha$ from $\Delta^{\prime}\subseteq\Delta$, 
obtained by a particular strategy of reductions application. 
\item Strong Normalization: To each derivation of $\alpha$ from $\Delta$ 
there is a normal derivation of $\alpha$ from $\Delta^{\prime}\subseteq\Delta$. 
This normal form can be obtained by applying reductions to the original 
derivation in any ordering. 
\end{itemize}

The strong normalization property for a natural deduction system is usually
proved by the so-called semantical method:

\begin{itemize}
\item  Define a property $P(\pi)$ on derivations $\pi$ in the Natural
Deduction system;

\item  Prove that this property implies strong normalization, that is $%
\forall\pi (P(\pi)\rightarrow SN(\pi))$, where $SN(X)$ means that $X$ is
strongly normalizable;

\item  Prove that $\forall \pi P(\pi)$.
\end{itemize}

There are well-known examples of this property $P(X)$ : (1) Prawitz's
``strong validity''; (2) Tait's ``convertibility''; (3) Jervell's
``regularity''; (4) Leivant's ``stability''; (5) Martin-L\"{o}f's
``computability''; (6) Girard's ``candidate de reducibilit\'{e}''. Note that
such semantical method is inconstructive and even in the case of purely
implicational fragment of minimal logic it provides no combinatorial insight
into the nature of strong normalization. Another, more constructive strategy
would be to show that there is a worst sequence of reductions always
producing a normal derivation. Let us call it a syntactic method of proving
the strong normalization theorem. This method is used in the present paper.

Other methods use assignments of rather complicated measures to derivations
such that arbitrary reductions decrease the measure, which by standard
inductive arguments yields a desired proof of the strong normalization. In
this paper we show how to represent $M^{\rightarrow }$ derivations in a
graph-like form and how to reduce (eliminating maximal formulas) these
representations such that a normalization theorem can be proved by counting
the number of maximal formulas in the original derivation. The strong
normalization will be a direct consequence of such normalization, since any
reduction decreases the corresponding measure of derivation complexity. The
underlying intuition comes from the fact that our graph representations use
only one node for any two identical formulas occurring in the original
Natural Deduction derivation (see Theorem \ref{theo1} for a more precise description).

In \cite{Geuvers2007} another approach to represent Natural Deduction using 
graphs is proposed. It reports a graph-representation of Natural Deduction, in
Gentzen as well as Fitch's style. In fact the proofs are represented as
hypergraphs, or boxed-graphs, with possibility of sharing subproofs. It is
developed not only for the implicational fragment, although the
representation of linear logic proofs is related as further work. Our
approach is different from \cite{Geuvers2007}, in that we include
graph-representations of formulas in the proofs. The fact that our
normalization procedure leads to strong-normalization is a consequence of
sharing subformulas, and hence subproofs, in our proof graph
representations. It is unclear whether a similar result is available using 
\cite{Geuvers2007}. 

\section{Mimp-graphs}\label{sectionPG}

Mimp-graphs are special directed graphs whose nodes and
edges are assigned with labels. Moreover we distinguish between formula
nodes and rule nodes. The formula nodes are labelled with formulas as being
encoded/represented by their principal connectives (in particular, atoms)
and the rule nodes are labelled with the names of the inference rules 
($\rightarrow$I and $\rightarrow$E). Both logic connectives proper and
inference names may be indexed, in order to achieve a 1--1 correspondence
between formulas (inferences) and their representations (names). Since
formulas are uniquely determined by the representations in question, i.e.
formula node labels, in the sequel we'll sometimes identify both; to
emphasize the difference we'll refer to the formers as formula graphs, i.e.
the ones whose formula node labels are formulas, instead of principal
connectives. The edges are labelled with tokens that identify the
connections between the respective rule nodes and formula nodes. Note that
formulas may occur only once in the mimp-graph. Subformulas are indicated by
outgoing edges with labels $l$ (left) and $r$ (right), see Figure~\ref
{fornodes}.

\begin{figure}[h]
\center
\includegraphics[scale=0.67]{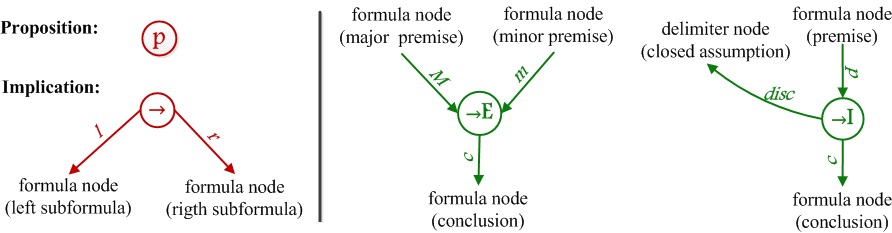} 
\caption{Types of formula nodes of the formula-graph and types of rule nodes of the mimp-graph}
\label{fornodes}
\end{figure}

The rule nodes, like in Natural Deduction, require the correct number of
premises. The premises are indicated by ingoing edges and there are edges
from the rule nodes to the conclusion formulas. The right-hand side of Figure~%
\ref{fornodes} shows the rule nodes $\rightarrow$I (implication
introduction) and $\rightarrow$E (implication elimination). Note that the
discharging of hypotheses may be vacuous. This case in a mimp-graph is
represented by a disconnected graph, where the discharged formula node is
not linked to the conclusion of the rule by any directed path.

\begin{figure}[h]
\begin{minipage}[c]{15cm}
\centering
	\begin{small}
	\def\defaultHypSeparation{\hskip .3in}
	\def\ScoreOverhang{0pt}
	  \a{$[p]^1$}
	  \a{$p \to q$}\RightLabel{ $\to$-$E$}
	  \b{$q$}\RightLabel{$\to$-$E$}
  	  \a{$[q \to r]^2$}
	  \b{$r$}\RightLabel{($\to$-$I$,{\scriptsize 1})}
	  \u{$p \to r$}\RightLabel{($\to$-$I$,{\scriptsize 2})}
	  \u{$(q \to r) \to (p \to r)$}
	 \DisplayProof
	\end{small}
	
		 \vspace{0.4cm}
 	$\Downarrow trans$
	 \vspace{0.5cm}
\end{minipage}

\begin{minipage}[r]{15cm}
	\centering
	\includegraphics[scale=0.65]{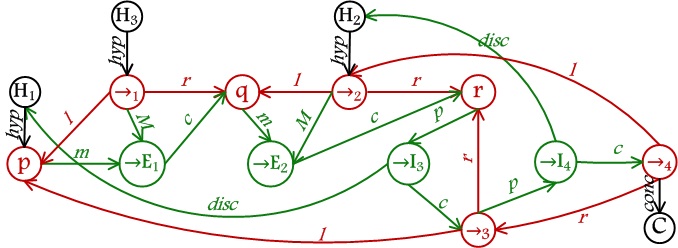}
\end{minipage}
\caption{The transition from a natural deduction proof to a mimp-graph}
\label{example1}
\end{figure}

\begin{figure}[h]
\begin{minipage}[c]{15cm}
\centering
\begin{small}
\def\defaultHypSeparation{\hskip .2in}
\def\ScoreOverhang{0pt}
\begin{prooftree}
\AxiomC{$[r]^{1}$}
\UnaryInfC{$((r \to s) \to r) \to r$}
\AxiomC{$[ (((r \to s) \to r) \to r) \to s]^{3}$}
\BinaryInfC{$s$} \RightLabel{1}
\UnaryInfC{$r\to s$} 
\AxiomC{$[(r \to s) \to r]^{2}$}
\BinaryInfC{$r$}\RightLabel{2}
\UnaryInfC{$(((r \to s) \to r) \to r)$}
\AxiomC{$[ (((r \to s) \to r) \to r) \to s]^{3}$}
\BinaryInfC{$s$}\RightLabel{3}
\UnaryInfC{$((((r \to s) \to r) \to r) \to s)\to s$}
\end{prooftree}

\end{small}
		 \vspace{0.2cm}
 	$\Downarrow trans$
	 \vspace{0.4cm}
\end{minipage}

\begin{minipage}[r]{15cm}
	\centering
	\includegraphics[scale=0.6]{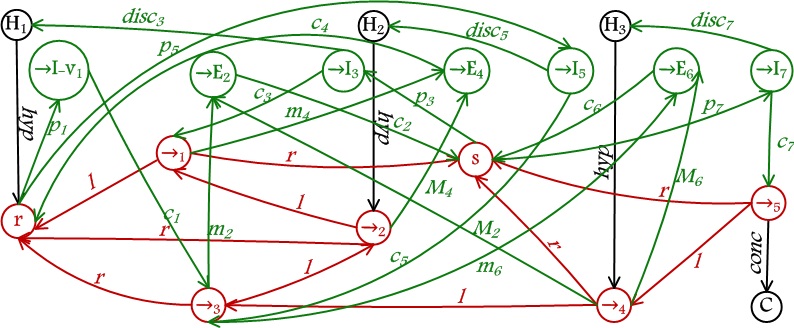}
\end{minipage}
\caption{The transition from the derivation of the formula $((((r \to s) \to r) \to r) \to s) \to s$ to a mimp-graph}
\label{example2}
\end{figure}

In the rule nodes, formulas are re-used, which is indicated by putting
several arrows towards it, hence the number of ingoing/outgoing edges with
label $p$ (premise), $M$ (major premise), $m$ (minor premise) and $c$
(conclusion) coming or going to a formula node could be arbitrarily large.
To make all this a bit more intuitive we give an example of a mimp-graph in
Figure~\ref{example1}, which can be seen as a derivation of $(q\rightarrow
r)\rightarrow (p\rightarrow r)$ from $(p\rightarrow q)$. Indices of
discarded hypotheses are replaced by additional edges assigned with the
label: $disc$ (discharge). This re-using of formulas is necessary. We remind
the reader that some valid implicational formulas, such as $((((r \to s) \to r) \to r) \to s) \to s$ (see Figure~\ref{example2}), 
need to use twice a subformula in a  Natural Deduction proof, in this case the subformula
$(((r \to s) \to r) \to r) \to s $ is used twice. Because of this, the edges $p$, $m$, $M$ and $c$ 
in Figure~\ref{example2} are indexed in a unique way.

The formula nodes in the graph (Figure~\ref{example1}) are labelled with
propositional letters $p$, $q$ and $r$, the connective $\rightarrow $; the
rule nodes are labelled with $\rightarrow$E and $\rightarrow$I. The
underlying idea is that there is an inferential order between rule nodes
that provides the corresponding derivability order; the formula node
labelled $\rightarrow _{4}$ linked to the delimiter node $C$ by an edge
labelled $conc$ is the root node and the conclusion of the proof represented
by the graph. Besides, the node $\rightarrow _{1}$ linked to the delimiter
node $H$ by the edge labelled $hyp$ (hypothesis) in the graph is
representing the premise $(p\rightarrow q)$.

We want to emphasize that the mimp-graphs put together information on
formula-graphs and rule nodes. To make it more transparent we can use
bicolored graphs. In this way formula nodes and edges between them are
painted red, whereas inference nodes and edges between them and adjacent
premises and/or conclusions are painted green. So nodes of types $\to$ and $%
p $ (propositions) together with adjacent edges $(l, r)$ are red, whereas
nodes labelled $\to$I and $\to$E together with adjacent edges $(m, M, p, c,
disc)$ are green. 

Now we give a formal definition of mimp-graphs.

\begin{definition}
\label{defLbl} \emph{L} is the union of the four sets of labels types:

\begin{itemize}
\item  \emph{R-Labels} is the set of inference labels: \{$\to$I$_n / n\in 
\mathbb{Z}\} \cup \{ \to$E$_m/ m \in \mathbb{Z}\}$,

\item  \emph{F-Labels} is the set of formula labels: \{$\to_i / i \in 
\mathbb{N}$\} and the propositional letters $\{p, q, r, . . .\}$,

\item  \emph{E-Labels} is the set of edge labels: \{$l$ (left), $r$ (right), 
 $conc$ (final conclusion), $hyp$ (hypothesis)$\}$ $ \cup $ $\{ p_j$ (premise)$/ j \in \mathbb{Z}\}$ $ \cup $ $ \{ m_j$ (minor premise)$/ j \in \mathbb{Z}\} $ $\cup $ $ \{ M_j$ (major premise)$/ j \in \mathbb{Z}\}$ $\cup$ $ \{ c_j$ (conclusion)$/ j \in \mathbb{Z}\}$ $ \cup$ $ \{disc_j$ (discharge)$/ j \in \mathbb{Z}\}$,

\item  \emph{D-Labels} is the set of delimiter labels: $\{H_k/ k \in \mathbb{%
Z}\} \cup \{C\}$.
\end{itemize}
\end{definition}

\begin{definition}
\label{defPG} A \emph{mimp-graph} $G$ is a directed graph $\langle$\emph{V, E,
L}, $l_V$, $l_E\rangle$ where: \emph{V} is a set of nodes, \emph{E} is a set
of \emph{edges}, \emph{L} is a set of labels, $\langle v\in$ \emph{V}, $t
\in $ \emph{L}, $v^{\prime}\in $ \emph{V}$\rangle$, where $v$ is the source
and $v^{\prime}$ the target, $l_V$ is a labeling function from \emph{V} to 
\emph{R$\cup$F-Labels}, $l_E$ is a labeling function from \emph{E} to \emph{
E-Labels}.

 
Mimp-graphs are defined recursively as follows:
\begin{description}
\item[Basis]  If $G_1$ is a formula graph with root node $\alpha_m$~%
\footnote{%
We will use the terms $\alpha_m$, $\beta_n$ and $\gamma_r$ to represent the
principal connective of the formula $\alpha$, $\beta$ and $\gamma$
respectively.}, then the graph $G_2$ is defined as $G_1$ with the delimiter
nodes $H_n$ and $C$ and the edges $(\alpha_m, conc, C)$ and $(H_n, hyp,
\alpha_m)$ is a mimp-graph.
	
\item[\ $\to$E]  If $G_1$ and $G_2$ are mimp-graphs, and the graph (intermediate step) 
obtained by  $G_1 \oplus G_2$~%
\footnote{%
By definition $G_1 \oplus G_2$ equalizes the nodes of $G_1$ with the nodes
of $G_2$ that have the same label, and equalizes edges with the same source,
target and label into one. } contains the edge $(\to_q, l,\alpha_m)$ and the
two nodes $\to_q$ and $\alpha_m$ linked to the delimiter node $C$, then the
graph $G_3$ is defined as $G_1 \oplus G_2$ with
	
\begin{enumerate}
\item  the removal of the ingoing edges in the node $C$ which were generated 
in the intermediate step (see Figure~\ref{rules1}, dotted area in $G_1 \oplus G_2$);

\item  a rule node $\to$E$_i$ at the top level;

\item  the edges: $(\alpha_m, m_{new},\to$E$_i)$, $(\to_q$, $M_{new}$,$\to$E$_i)$, 
	$(\to$E$_i, c_{new}, \beta_n)$ and $(\beta_n, conc, C)$, where $new$ is a fresh (new) 
	index considering all edges of kind $c$, $m$ and $M$ ingoing and/or outgoing the formula-nodes
        $\alpha_m$, $\beta_n$ and $\to_q$,
\end{enumerate}
is a mimp-graph (see Figure~\ref{rules1}).

\begin{figure}[h] 
\centering
	\includegraphics [scale=0.7]{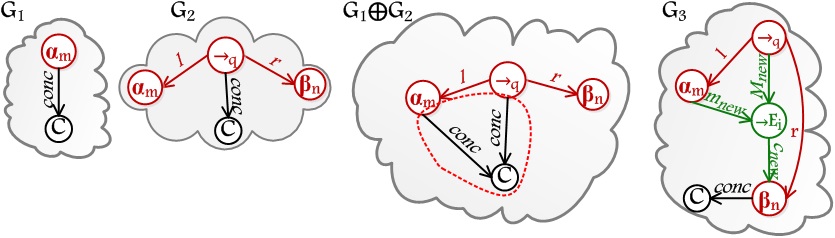}
	\caption{The $\to$E rule of mimp-graph}
	\label{rules1}
\end{figure}

\item[\ $\to$I\ ]  If $G_1$ is a mimp-graph and contains a node $\beta_n$
linked to the delimiter node $C$ and the node $\alpha_m$ linked to the
delimiter node $H_k$, then the graph $G_2$ is defined as $G_1$ with
	
\begin{enumerate}
\item  the removal of the edges: $(\beta_n, conc, C)$; 

\item  a rule node $\to$I$_j$ at the top level;

\item  a formula node $\to_t$ linked to the delimiter node $C$ by an edge $%
(\to_t, conc, C)$;

\item  the edges: $(\to_t, l, \alpha_m)$, $(\to_t, r, \beta_n)$, 
	$(\beta_n, p_{new},\to$I$_j$), $(\to$I$_j$, $c_{new}, \to_t)$, and $(\to$I$_j$, $disc_{new}, H_k)$, where $new$ is a fresh index concerning ingoing and outgoing edges of type $c$ and $p$ of the formula-nodes $\beta_n$, $\to_t$ and $\alpha_m$,
\end{enumerate}
is a mimp-graph (see Figure~\ref{rules2}; the $\alpha_m$-node is \emph{discharged}).
		
\item[\ $\to$I-v]  \footnote{%
the ``v'' stands for ``vacuous'', this case of the rule $\to$I discharges a
hypothesis vacuously. This means that $\alpha_m$ has no ingoing Hyp-edge}
If $G_1$ is a mimp-graph, and $G$ is a formula graph with root node $%
\alpha_m $, and $G_1$ contains a node $\beta_n$ linked to the delimiter node 
$C$, then the graph $G_2$ is defined as $G_1 \oplus G$ with

\begin{enumerate}
\item  the removal of the edge: $(\beta_n, conc, C)$;

\item  a rule node $\to$I$_j$ at the top level;

\item  a formula node $\to_t$ linked to the delimiter node $C$ by an edge $%
(\to_t, conc, C)$;

\item  the edges: $(\to_t, l, \alpha_m)$, $(\to_t, r, \beta_n)$, 
	$(\beta_n, p_{new},\to$I$_j$) and $(\to$I$_j$, $c_{new}, \to_t)$,
	where $new$ is an index under the same conditions of the previous case,
	\end{enumerate}
is a mimp-graph (see Figure~\ref{rules3}).

\end{description}
\end{definition}

\begin{figure}[h] 
\centering
	\includegraphics [scale=0.7]{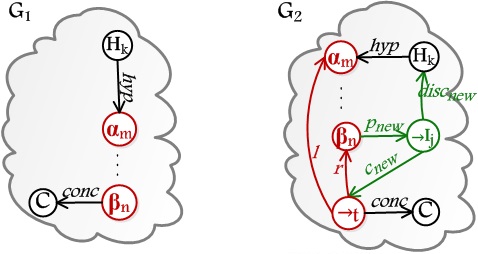}
	\caption{The $\to$I rule of mimp-graph}
	\label{rules2}
\end{figure}
\begin{figure}[h] 
\centering
	\includegraphics [scale=0.7]{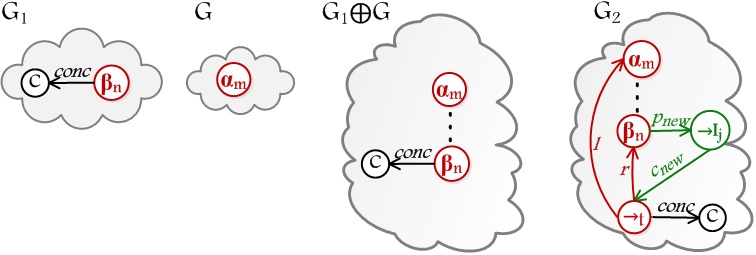}
	\caption{The $\to$I-v rule of mimp-graph}
	\label{rules3}
\end{figure}

Lemma\ref{lem-PG} enables us to prove that a given graph $G$ is a mimp-graph
without explicitly supplying a construction. Among others it says that we have 
to check that each node of $G$ is of one of the possible types that generate 
the Basis, $\to$E, $\to$I and $\to$I-v construction
cases of Definition~\ref{defPG}.

\begin{definition}[Inferential Ordering]
Let $G$ be a mimp-graph. An inferential order $>$ on nodes of $G$ is a partial
ordering of the rule nodes of $G$, such that, $n < n^{\prime}$, iff, $n$ and $%
n^{\prime}$ are rule nodes, and there is a formula node $f$, such that, n $%
\xrightarrow{l_1}$ f $\xrightarrow{l_2} n^{\prime}$ and $l_1$ is $c$ and $%
l_2 $ is $m$, or , $l_1$ is $c$ and $l_2$ is $M$, or, $l_1$ is $c$ and $l_2$
is $p$.
\end{definition} 

In order to avoid overloading of indexes, we will omit whenever is possible, 
the indexing of edges of kind $c$, $m$, $M$, $p$ and $disc$, remembering that 
the coherence of indexing is established by the kind of rule-node to which they 
are linked.

\begin{lemma}
$G$ is a mimp-graph if and only if the following hold: \label{lem-PG}

\begin{enumerate}
\item  There exists a well-founded (hence acyclic) inferential order $>$ on
all rule nodes of the mimp-graph\footnote{%
We can extend this ``green'' inferential order $>$ to the full ``mixed''
order $>^*$ by adding new ``red'' relations $>$ corresponding to arrows $%
\xrightarrow{l}$ and $\xrightarrow{r}$ between formula nodes. Note that $>^*$
may contain cycles (see Figure~\ref{example1}). However all recursive
definitions and inductive proofs to follow are based on the well-founded
``green'' order $>$, hence being legitimate.}.

\item  Every node $N$ of $G$ is of one of the following six types:

\begin{description}
\item[L \ \ \ ]  $N$ is labelled with one of the propositional letters: \{p,
q, r, ... \}. $N$ has no outgoing edges $l$ and $r$.

\item[F \ \ \ ]  $N$ has label $\to_n$ and has exactly two outgoing edges
with label $l$ and $r$, respectively. $N$ has outgoing edges with labels $p$, 
$m$ or $M$; and it has at most one ingoing edge with label $c$ and at
most one ingoing edge with label $hyp$.

\item[E \ \ \ ]  $N$ has label $\to$E$_i$ and has exactly one outgoing edge 
($\to$E$_i$, $c$, $\beta_n$), where $\beta_n$ is a node type \emph{L} or 
\emph{F}. $N$ has exactly two ingoing edges ($\alpha_m, m, \beta_n$) and 
($\to_q, M, \to$E$_i$), where $\alpha_m$ is a node type \emph{L} or \emph{F}.
There are two outgoing edges from the node $\to_q$: $(\to_q, l,\alpha_m)$
and $(\to_q, r, \beta_n)$.

\item[I \ \ \ ]  $N$ has label $\to$I$_j$ (or  $\to$I-v$_j$, if discharges an hypothesis vacuously), has one outgoing edge ($\to$I$_j$%
, $c$, $\to_t$), and one (or zero for the case $\to$I-v) outgoing edge ($\to$I$_j$, $disc$, $H_k$). 
$N$ has exactly one ingoing edge: $(\beta_n, p, \to$I$_j)$, where $\beta_n$
is a node type \emph{L} or \emph{F}. There are two outgoing edges from the
node $\to_t$: $(\to_t, l, \alpha_m)$ and $(\to_t, r, \beta_n)$.

\item[H \ \ \ ]  $N$ has label \emph{$H_k$} and has exactly one outgoing
edge $hyp$.

\item[C \ \ \ ]  $N$ has label \emph{C} and has exactly one ingoing edge $%
conc$.
\end{description}
\end{enumerate}
\end{lemma}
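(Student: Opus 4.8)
The plan is to prove both directions of the biconditional. The forward direction ($G$ is a mimp-graph $\Rightarrow$ conditions 1 and 2 hold) proceeds by induction on the recursive construction of $G$ given in Definition~\ref{defPG}. I would set up the induction on the number of rule-node-introducing steps (the Basis, $\to$E, $\to$I, and $\to$I-v cases). For the base case, a formula graph with the two delimiter nodes $H_n$ and $C$ has no rule nodes at all, so the inferential order is vacuously well-founded, and each node is of type \textbf{L}, \textbf{F}, \textbf{H}, or \textbf{C} by direct inspection of the formula-graph shape from Figure~\ref{fornodes}. For each inductive step I would verify that the freshly added rule node has exactly the edge profile demanded by the corresponding type (\textbf{E} for $\to$E, \textbf{I} for $\to$I and $\to$I-v), and that no previously-established node changes type --- here the key observation is that the operation adds edges of kind $c$, $m$, $M$, $p$, $disc$ incident to existing formula nodes, and types \textbf{F}, \textbf{L} already permit arbitrarily many such incident edges, so the type invariant is preserved. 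For well-foundedness of $>$, each construction step places the new rule node strictly above all existing rule nodes (``at the top level''), so the order is extended by adding a new maximal element, which cannot create a cycle.

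The converse direction (conditions 1 and 2 $\Rightarrow$ $G$ is a mimp-graph) is the substantive part, since it must recover a construction witnessing that $G$ is built by the recursive clauses. The natural approach is induction on the well-founded inferential order $>$ guaranteed by condition~1, or equivalently on the number of rule nodes. I would take the $>$-maximal rule node $N$ --- which exists by well-foundedness --- and argue that removing $N$ together with its characteristic edges (and, in the $\to$I/$\to$I-v case, re-attaching the appropriate $conc$ edge to $C$) yields a graph $G'$ that still satisfies conditions 1 and 2. The inductive hypothesis then makes $G'$ a mimp-graph, and re-applying the matching construction clause ($\to$E, $\to$I, or $\to$I-v according to the type of $N$, read off from condition~2) rebuilds $G$, exhibiting it as a mimp-graph. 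When no rule nodes remain, conditions~2 force the graph to consist of formula nodes plus delimiters $H$, $C$ in exactly the Basis shape.

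The main obstacle I expect is the bookkeeping in the converse step: one must check that after deleting the maximal rule node $N$, the resulting $G'$ genuinely falls under the construction clauses, i.e. that the conclusion formula node of $N$ becomes correctly linked (or re-linked) to the delimiter $C$ so that $G'$ has a well-formed conclusion, and that the hypothesis node discharged by an $\to$I step is correctly reconnected to an $H_k$ delimiter. Because the graph shares formula nodes, deleting $N$ must not orphan a formula node that is still used as a premise or conclusion elsewhere; I would use the fact that $N$ is $>$-maximal to ensure its conclusion node is not consumed as a premise ($m$, $M$, or $p$) by any other rule node, so the re-linking to $C$ is unambiguous. A secondary delicacy is matching the fresh-index conventions so that the rebuilt edges carry coherent indices, but by the indexing remark preceding the lemma this is routine and can be suppressed. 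Once these local surgery lemmas are in place, the two inductions close and the biconditional follows.
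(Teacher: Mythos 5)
Your proposal is correct and matches the paper's own proof essentially step for step: the forward direction by induction on the recursive construction with the new rule node placed as the maximal element of the inferential order, and the converse by induction on the number of rule nodes, removing the $>$-maximal rule node, invoking the inductive hypothesis, and rebuilding via the matching construction clause. The only detail the paper makes explicit that you gloss over is how, in the $\to$E case, the two inferential orders $>_1$ on $G_1$ and $>_2$ on $G_2$ are merged into one well-founded order on $G_1 \oplus G_2$ (take their union and additionally put $n > m$ for $n \in G_1$, $m \in G_2$), but this is a minor bookkeeping point within the same argument.
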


\begin{proof}

$\Rightarrow$: Argue by induction on the construction of mimp-graph (Definition~\ref{defPG}). 
For every construction case for mimp-graphs we have to check the three properties 
stated in Lemma. Property (2) is immediate. For property (1), we know 
from the induction hypothesis that there is an inferential order $>$ on rule 
nodes of the mimp-graph. In the construction cases $\to$I, $\to$I-v or $\to$E, we make the new rule node that is introduced highest in the $>$-ordering, which yields an 
inferential ordering on rule nodes. 
In the construction case $\to$E, when we have two inferential 
orderings, $>_1$ on $G_1$ and $>_2$ on $G_2$. Then $G_1 \oplus G_2$ 
can be given an inferential ordering by taking the union of $>_1$ and $>_2$ 
and in addition putting $n > m$ for every rule node $n, m$ such that  
$n \in G_1, m \in G_2$.

$\Leftarrow$: Argue by induction on the number of rule nodes of $G$. 
Let $>$ be the topological order that is assumed to exist. Let $n$ 
be the rule node that is maximal w.r.t. $>$. Then $n$ must be on 
the top position. When we remove node $n$, including its edges 
linked (if $n$ is of type I) and the node type $C$  is linked to 
the premise of the rule node, we obtain a graph $G'$ that satisfies 
the properties listed in Lemma. By induction hypothesis we see 
that $G'$ is a mimp-graph. Now we can add the node $n$ again, using one of 
the construction cases for mimp-graphs: {\em Basis} if $n$ is a 
{\em L} node or {\em F} node, $\to$E if $n$ is an {\em E} node, $\to$I if $n$ is an {\em I} node.
\end{proof}

It is natural to consider minimal mimp-graph-like representations of given
natural deductions. Actually one can try to minimize the number of F-Labels
and/or R-Labels, but for the sake of brevity we consider only the F-option,
as it helps to reduce the size under standard normalization (see the next
section). To grasp the point note that mimp-graph in Figure~\ref{example1}
(see above) is F-minimal, i.e. its F-labelled nodes refer to pairwise
distinct formulas. This observation is summarized by

\begin{theorem}[F-minimal representation]\label{theo1}
Every standard tree-like natural deduction $\Pi $ has a uniquely
determined (up to graph-isomorphism) \emph{F-minimal mimp-like representation%
} $G_{\Pi }$, i.e. such a one that satisfies the following four conditions.

\begin{enumerate}
\item  $G_{\Pi }$ is a mimp-graph whose size does not exceed the size of $%
\Pi $.

\item  $\Pi $ and $G_{\Pi }$ both have the same (set of) hypotheses and the
same conclusion.

\item  There is graph homomorphism $h:\Pi \rightarrow G_{\Pi }$ that is
injective on R-Labels.

\item  All F-Labels occurring in $G_{\Pi }$\ denote pairwise distinct
formulas.
\end{enumerate}
\end{theorem}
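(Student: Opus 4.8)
The plan is to construct $G_\Pi$ by recursion on the tree $\Pi$, following exactly the four clauses of Definition~\ref{defPG}, to read the four conditions off that construction, and then to establish uniqueness by a separate forcing argument based on Lemma~\ref{lem-PG}. The key preliminary move is to fix, once and for all, an injective indexing $\phi$ of formulas by natural numbers and to insist that the formula node representing a formula $\gamma$ always carry the label $\to_{\phi(\gamma)}$ (or the appropriate propositional letter when $\gamma$ is atomic). Since the operation $\oplus$ identifies nodes bearing the same label, this convention makes F-minimality essentially free: any two occurrences of one and the same formula, wherever they arise in the recursion, are automatically collapsed to a single node, so that condition (4) holds by construction. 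Dually, I would give each inference of $\Pi$ a fresh rule-label, so that distinct inferences are never merged by $\oplus$.

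First I would define the translation itself. The base case, a leaf of $\Pi$, is the Basis clause: the formula graph of the hypothesis together with the delimiters $H$ and $C$ and the edges $(\alpha_m,conc,C)$ and $(H,hyp,\alpha_m)$. For an inference $\to$E with premises concluding $\alpha\to\beta$ and $\alpha$, I would recursively form $G_{\Pi_1}$ and $G_{\Pi_2}$, take $G_{\Pi_1}\oplus G_{\Pi_2}$, and apply the $\to$E clause; here one must check that the required edge $(\to_q,l,\alpha_m)$ is present after $\oplus$, which holds because the minor premise is literally the left subformula of the major premise $\alpha\to\beta$, so the two nodes are identified. The $\to$I and $\to$I-v inferences are treated by the corresponding clauses, the vacuous-discharge case producing $\to$I-v. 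Simultaneously I would define the homomorphism $h\colon\Pi\to G_\Pi$ sending each formula occurrence of $\Pi$ to the unique node carrying its label and each inference to its rule node.

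Next I would verify the four conditions by induction on $\Pi$. Condition (2) is read directly off the construction, since the root conclusion and the undischarged leaves are carried through every clause to the $conc$-edge into $C$ and the $hyp$-edges out of the $H_k$'s, respectively. Condition (3) holds because rule-labels are assigned freshly and hence never equalized by $\oplus$, so $h$ is injective on rule nodes. Condition (4) is immediate from the indexing convention of the first paragraph. For condition (1), the rule nodes of $G_\Pi$ are in bijection with the inferences of $\Pi$, whereas the formula nodes number at most the formula occurrences of $\Pi$ because $\oplus$ only ever merges nodes; counting nodes then yields $|G_\Pi|\le|\Pi|$.

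It remains to prove uniqueness, and this is where I expect the real work. Suppose $G$ and $G'$ both satisfy (1)--(4) for the same $\Pi$. Using the type characterization of Lemma~\ref{lem-PG}, I would argue that condition (3) together with the matching conclusion from (2) forces the rule nodes of each graph to be in bijection with the inferences of $\Pi$ and to be linked exactly as the inferential order of $\Pi$ dictates, while condition (4) forces the formula nodes to be in bijection with the distinct formulas occurring in $\Pi$, their $l$- and $r$-edges being fixed by the subformula relation; the remaining $m,M,p,c,disc$-edges are then determined by which formulas are premises and conclusions of which inferences, that is, by $\Pi$ itself. The evident label-respecting correspondence -- same formula to same formula, corresponding inference to corresponding inference -- is therefore a graph isomorphism. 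The main obstacle is precisely this forcing step: condition (3) yields injectivity but not a priori surjectivity on rule nodes, so I must invoke Lemma~\ref{lem-PG} and the size bound (1) to exclude spurious rule nodes, and I must check that F-minimality together with the type constraints genuinely pins down every edge, so that $G$ and $G'$ cannot diverge in their edge structure. The bookkeeping around the removal of the intermediate $conc$-edges in the $\to$E clause, though routine, is the likeliest place for a slip.
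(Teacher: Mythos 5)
Your proposal is correct in substance, but it reaches the result by a genuinely different route than the paper. The paper's proof is a direct, global construction: it takes the set $F$ of distinct formulas occurring in $\Pi$ and the surjection $f:N\rightarrow F$ from formula occurrences onto formulas, declares one F-node per element of $F$ (plus nodes for missing subformulas) and one R-node per inference, and wires the edges accordingly; condition (4) is then immediate because every formula appears exactly once as a label, and the paper stops essentially there, leaving mimp-graphhood, the size bound, the homomorphism, and uniqueness as implicit consequences of the canonicity of the construction. You instead build $G_\Pi$ by recursion on $\Pi$ through the clauses of Definition~\ref{defPG}, and your fixed injective indexing $\phi$ of formulas is precisely the graph-theoretic counterpart of the paper's surjection $f$: it makes $\oplus$ perform the collapsing that the paper does by hand. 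What your route buys is that $G_\Pi$ is a mimp-graph \emph{by construction}, with no appeal to Lemma~\ref{lem-PG} needed for condition (1), and your explicit forcing argument for uniqueness (which the paper omits entirely) is a real addition -- though, as you note yourself, excluding spurious rule nodes there still needs care, e.g.\ observing that the single $conc$-edge into $C$ and the typing constraints of Lemma~\ref{lem-PG} leave no room for rule nodes outside the image of $h$ once the size bound is imposed. What the paper's route buys is brevity and the fact that conditions (4) and the node count are visible at a glance, at the cost of leaving the verification that the quotiented structure satisfies Definition~\ref{defPG} unstated. One detail you should still pin down in your recursion: the $H_k$ delimiters must also be indexed consistently (same hypothesis formula, same $H$-label), otherwise $\oplus$ can leave a formula node with two ingoing $hyp$-edges, violating the type \textbf{F} clause of Lemma~\ref{lem-PG}.
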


\begin{proof}
Let $N$ and $F$ be the set of nodes and formulas, respectively, occurring in 
$\Pi $. Note that $\Pi $ determines a fixed surjection $f:N\rightarrow F$
that may not be injective (for in $\Pi $, one and the same formula may be
assigned to different nodes). In order to obtain $G_{\Pi }$ take as R-nodes
the inferences occurring in $\Pi $ assigned with the corresponding ``green''
R-Labels representing inferences' names (possibly indexed, in order to
achieve a 1--1 correspondence between inferences and R-Labels, cf. Figure~\ref{example1}).\ 
Define basic F-nodes of $G_{\Pi }$ as formulas from $F$\ assigned with
the corresponding ``red'' F-Labels representing formulas' principal
connectives (possibly indexed, in order to achieve a 1--1 correspondence
between formulas and F-Labels, cf. Figure~\ref{example1}). So the total number of all
basic F-nodes of $G_{\Pi }$ is the cardinality of the set $F$, while $f$
being a mapping from the nodes of $\Pi $ onto the basic F-nodes of $G_{\Pi }$%
. To complete the construction of $G_{\Pi }$ we add, if necessary, the
remaining F-nodes labelled by failing ``red'' representations of subformulas
of $f(x)$, $x\in N$, and define the E-Labels of $G_{\Pi }$ (both ``green''
and ``red''), accordingly. Note that by the definition all nodes of $G_{\Pi }
$ have pairwise distinct labels. In particular, every F-Label occurs only
once in $G_{\Pi }$, which yields the crucial condition 4.
\end{proof}

\section{Normalization for mimp-graphs}\label{sec:normal}

In this section we define the normalization procedure for mimp-graphs. It is
based on standard normalization method given by Prawitz. Thus a \emph{%
maximal formula} in mimp-graphs is a $\rightarrow $-$I$ followed by a $%
\rightarrow $-E of the same formula graph (see Definition~\ref{maxformula}).
It is the same notion of maximal formulas that is being used in natural
deduction derivations. So a maximal formula occurrence is the consequence of
an application of an introduction rule and major premise of an application
of an elimination rule. But here we assume that derivations are represented
by mimp-graphs. We wish to eliminate such maximal formula by dropping nodes
and edges that are involved in the maximal formula. However, it could also
happen that between the rule nodes $\rightarrow $-I and $\rightarrow $-E
there are several other maximal formulas. 

\begin{definition}
\label{maxformula} A \emph{maximal formula} $m$ in a mimp-graph $G$ (see
Figure~\ref{maxforfig}) is a sub-graph of $G$ consisting of:

\begin{enumerate}
\item  the formula nodes $\alpha_m$, $\beta_n$, $\to_q$, the rule node $\to$I%
$_i$ and the delimiter node $H_u$;

\item  the rule node $\to$E$_j$ at the top level;

\item  the edges: $(\to_q, l, \alpha_m)$, $(\to_q,r, \beta_n)$, $(\beta_n,
p, \to$I$_i$), $(\to$I$_i, c, \to_q)$, $(H_u, hyp, \alpha_m)$, $(\to$I$_i,
disc, H_u)$, $(\alpha_m,$ $m,\to$E$_j)$, $(\to_q, M, \to$E$_j)$ and $(\to$E$%
_j, c, \beta_n)$;
\end{enumerate}
\end{definition}

\begin{figure}[h]
\begin{center}
\begin{minipage}{6cm}

 $\begin{array}{lcr}
  \a{$\Pi_1$}
  \noLine
  \u{$\alpha$} 
  \a{$[\alpha]^u$}
  \noLine
  \u{$\Pi_2$}
  \noLine
  \u{$\beta$}\RightLabel{$u$}
  \u{$\alpha\to\beta$}
  \b{$\beta$} 
  \noLine
  \u{$\Pi_3$}
\DisplayProof
\hspace{1cm}\Rightarrow

\end{array}$  
\end{minipage}
\begin{minipage}{7cm}
\includegraphics[scale=0.55]{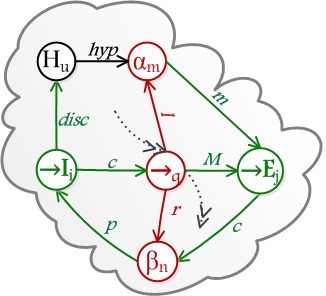}
\end{minipage}
\caption{Maximal formula in mimp-graphs}
\label{maxforfig}

\end{center}
\end{figure}

\begin{definition}
\label{defbranch} (1) For $n_i \in V$, a \emph{p-path} in a proof-graph is a
sequence of vertices and edges of the form: $n_1 \xrightarrow{l_1} n_2 %
\xrightarrow{l_2}... \xrightarrow{l_{k-2}} n_{k-1} \xrightarrow{l_{k-1}}
n_{k}$, such that $n_1$ is a hypothesis formula node, $n_k$ is the
conclusion formula node, $n_i$ alternating between a rule node and a formula
node. The edges $l_i$ alternate between two types of edges: the first is $%
l_j \in \{m, M, p\}$ and the second $l_j=c$. (2) A \emph{\ branch} is an
initial part of a \emph{p-path} which stops at the conclusion formula node
or at the first minor premise whose major premise is the conclusion of a
rule node.
\end{definition}

\begin{definition}
\label{defReorder} Let $G$ a graph obtained by dropping rule nodes in a
mimp-graph then, the reordering of $G$ is defined as the graph $G$ with the
following (new) inference order on the rule nodes of $G$.


\begin{itemize}
\item  $o(t_m) = 0$ for a rule node $t_m$ starting with hypothesis.

\item  $o(t) = o(t^{\prime}) + 1$ if the conclusion formula of rule node $%
t^{\prime}$ is premise or major premise of $t$.
\end{itemize}
\end{definition}

\begin{proposition}
\label{prop} If a graph $G$ is obtained by a reordering by means of the
operation defined in Definition~\ref{defReorder} then, $G$ is a mimp-graph.
\end{proposition}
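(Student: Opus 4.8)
The plan is to invoke the characterisation of mimp-graphs given by Lemma~\ref{lem-PG}: a directed labelled graph is a mimp-graph exactly when (1) its rule nodes carry a well-founded inferential order and (2) each of its nodes is of one of the six types \textbf{L}, \textbf{F}, \textbf{E}, \textbf{I}, \textbf{H}, \textbf{C}. Since the reordering of Definition~\ref{defReorder} alters only the inference order and leaves the set of nodes, edges and labels of $G$ untouched, condition (2) is not affected by the reordering itself; I would dispose of it first by a short case analysis, checking that the act of dropping rule nodes (together with the re-linking of the freed conclusion to the delimiter $C$ recorded in Definition~\ref{defReorder}) leaves every surviving node satisfying the local degree and label constraints of its type. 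All the real work then concentrates on re-establishing condition (1) for the new order.

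For condition (1) I would first argue that the assignment $o$ of Definition~\ref{defReorder} is well-defined. Each rule node is fed, on its spine, by at most one conclusion of another rule node: an \textbf{I}-node has a single premise edge $p$ and an \textbf{E}-node a single major-premise edge $M$, the minor premise being deliberately excluded from the recursion. Hence $o(t)=o(t')+1$ determines a unique value along the premise/major-premise relation, with $o=0$ at the rule nodes reached directly from a hypothesis. Because $o$ takes values in $\mathbb{N}$ and increases strictly along every $p$- and $M$-edge between rule nodes, the sub-relation generated by these edges --- the branches of Definition~\ref{defbranch} --- is acyclic and well-founded.

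The hard part, and the step I expect to be the main obstacle, is to show that the order produced by $o$ is a genuine inferential order in the sense required by Lemma~\ref{lem-PG}, i.e.\ that it also respects the minor-premise edges $m$, which the ranking $o$ ignores. The danger is that the splicing performed while eliminating a maximal formula (Definition~\ref{maxformula}) re-attaches a minor-premise subderivation and could, a priori, create a cycle running back into the spine above its point of use. I would rule this out structurally: the graph $G$ is a reduct of a mimp-graph whose inferential order was already well-founded, and the reduction only removes the $\to$I/$\to$E pair and relocates an existing acyclic subgraph beneath the node that consumes its conclusion, so no $m$-edge can point from a later rule node back to an earlier one. Establishing this invariant across the re-linking yields acyclicity of the full relation on rule nodes; together with condition (2) and Lemma~\ref{lem-PG}, this gives that the reordered $G$ is a mimp-graph.
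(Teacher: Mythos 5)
The paper contains no proof of Proposition~\ref{prop} at all: it is stated bare and then simply invoked inside Definition~\ref{elimina} (``Note that Proposition~\ref{prop} ensures that the result is a mimp-graph''). So there is no official argument to compare yours against --- your proposal supplies something the paper omits. Your overall strategy is the right one: Lemma~\ref{lem-PG} is explicitly advertised as the tool for certifying that a graph is a mimp-graph without exhibiting a construction, and your decomposition into (2) the local node-type conditions and (1) well-foundedness of the inferential order is the correct skeleton.

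Two inaccuracies are worth fixing. First, you attribute a ``re-linking of the freed conclusion to the delimiter $C$'' to Definition~\ref{defReorder}, and your ``hard part'' worries about cycles created by splicing the minor-premise subderivation back in. In fact no edge is ever added: Definition~\ref{defReorder} only assigns the ranking $o$, and the elimination step of Definition~\ref{elimina} only deletes nodes ($\to$I$_i$, $\to$E$_l$, $H_u$, possibly $\to_q$) together with their incident edges. The splicing of $\Pi_1$ into $\Pi_2$ happens automatically because the minor premise and the discharged hypothesis are literally the same formula node $\alpha_m$. Consequently the hard part dissolves into something simpler and cleaner than your structural invariant: after pure deletion, the structurally defined inferential relation ($n<n'$ iff $n\xrightarrow{c}f\xrightarrow{m/M/p}n'$ for some formula node $f$) on the surviving rule nodes is a sub-relation of the relation of the input mimp-graph, which is well-founded by Lemma~\ref{lem-PG}; a sub-relation of a well-founded relation on a finite graph is again well-founded, minor-premise edges included. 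The ranking $o$ is then merely a topological numbering compatible with this relation, not an object that must itself be proved to respect $m$-edges. Second, the well-definedness of $o$ has a wrinkle you pass over: a formula node can carry both an ingoing $hyp$ edge and an ingoing $c$ edge (this is precisely the status of $\alpha_m$ inside a maximal formula), in which case the clauses $o(t)=0$ and $o(t)=o(t')+1$ conflict; one must stipulate that the recursive clause takes precedence, or simply abandon $o$ in favour of the sub-relation argument above. With these repairs your proof is correct, and it establishes a statement the paper leaves unproved.
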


\begin{definition}
\label{elimina} Given a mimp-graph $G$ with a maximal formula $m$,
eliminating a maximal formula is the following transformation of a
mimp-graph, where the maximal formula $m$ satisfies the following
requirements:

\begin{enumerate}
\item  Between the rule nodes $\to$I$_i$ and $\to$E$_l$ there are zero or 
more maximal formulas with inferential orders within the range of these rule
nodes.
\item  There is an edge $(\to$I$_i ,c, \to_q)$, and, the formula node $\to_q$
has zero  or more ingoing edges.

\item  There is an edge $(\to_q, M, \to$E$_l)$, and, the formula node $\to_q$
is  the premise of zero or more of another rule nodes.

\item  If a branch will be separated from the inferential order this branch 
must be insertable in the following branch, according to the order,  i.e.
the conclusion of this separated branch is the premise in the following
branch.
\end{enumerate}

The elimination of a maximal formula is the following operation on a
mimp-graph  (see Figure~\ref{case2}, the dotted arrows are representing sets
of edges):

\begin{enumerate}
\item  If there is no maximal formula between the rule nodes $\rightarrow $I$%
_{i}$ and $\rightarrow $E$_{l}$ then follow these steps:
\begin{enumerate}
\item  If the edge $(\to$I$_i, c, \to_q)$ is the only ingoing edge to $\to_q
$ and  the edge $(\to_q, M, \to$E$_l)$ is the only outgoing edge from $\to_q$
then remove the edges to and from the formula node $\to_q$, and the  formula
node $\to_q$.

\item  Remove the edges to and from the nodes $\to$I$_i$, $\to$E$_l$ and $H_u
$.

\item  Remove the nodes $\to$I$_i$, $\to$E$_l$ and $H_u$.

\item  Apply the operation defined in Definition~\ref{defReorder} to the
resulting graph. Note that Proposition~\ref{prop} ensures that the result is
a mimp-graph.
\end{enumerate}

\item  Otherwise eliminate the maximal formulas between the rule nodes $\to$I$_i$
and $\to$E$_l$ as in the previous step. 
\end{enumerate}

\begin{figure}[h]
\centering
\includegraphics [scale=0.55]{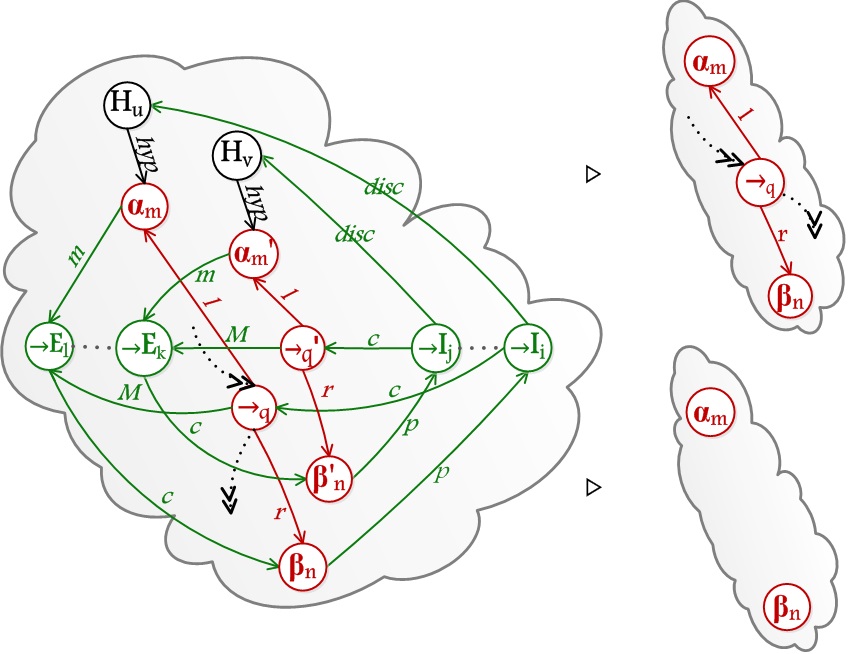}
\caption{Elimination of a maximal formula in mimp-graphs}
\label{case2}
\end{figure}
\end{definition}

Note that the removal of a node $\to$I generated by case $\to$I-v, in the
Definition~\ref{defPG}, disconnects the graph meaning that the sub-graph
hypotheses linked, by the edge $m$, to eliminated node $\to$-E is no longer
connected to the delimiter $C$.

\begin{figure}[p]
\begin{minipage}[c]{15cm}

\centering
\begin{small}

 \centering
 $\begin{array}{lcr}
 \a{$\Pi_2$}
  \noLine
  \u{$\beta$} 

  \a{$\Pi_1$}
  \noLine
  \u{$\alpha$} 
 
  \a{$[\beta]^v [\alpha]^u$}
  \noLine
  \u{$\Pi_0$}
  \noLine
  \u{$\gamma$}\RightLabel{$v$}
  \u{$\beta\to\gamma$}\RightLabel{$u$}
  \u{$\alpha\to(\beta\to\gamma)$}

  \b{$\beta\to\gamma$} 

  \b{$\gamma$} 
\DisplayProof
 &
\rhd
\hspace{0.5cm}
  \a{$\Pi_2$} 
  \noLine
  \u{$\beta$}
  \a{$[\beta]^v$}
  \a{$\Pi_1$}
  \noLine
  \u{$\alpha$}
  \noLine
  \b{$\Pi_0$}
  \noLine
  \u{$\gamma$}
  \u{$\beta \to \gamma$}
  \b{$\gamma$}
\DisplayProof
\hspace{0.5cm}
\rhd
&
  \a{$\Pi_2$ \ $\Pi_1$}
    \noLine
  \u{$\beta$ \ \ \ \ $\alpha$}
  \noLine
  \u{$\Pi_0$}
  \noLine
  \u{$\gamma$}
\DisplayProof

 \end{array}$
\end{small}	
	
	\vspace{0.8cm}
 	$\Downarrow trans$
	 \vspace{0.8cm}
\end{minipage}
\begin{minipage}[r]{15cm}
	\centering
	\includegraphics[scale=0.47]{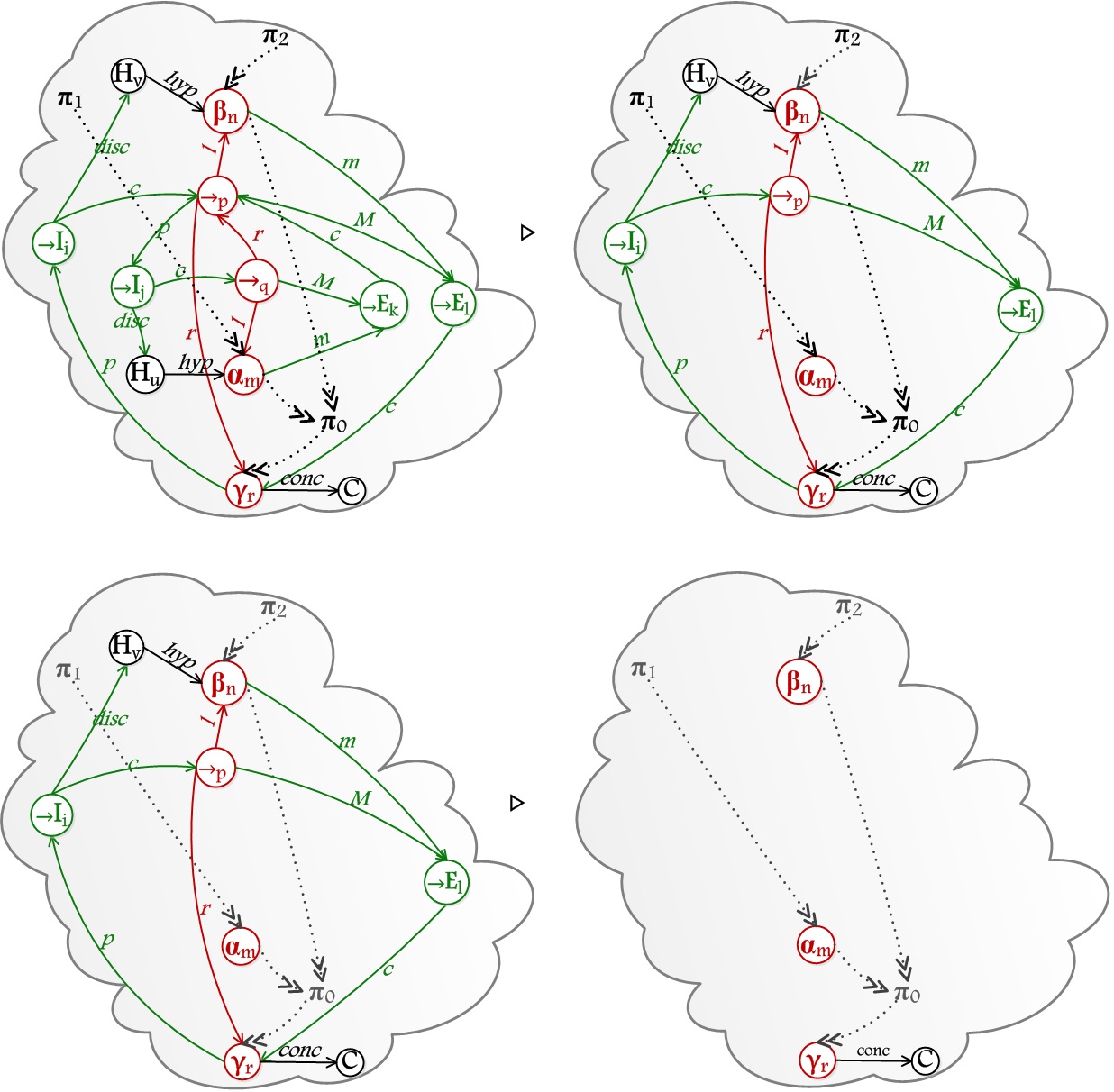}
\end{minipage}
\caption{Eliminating a maximal formula in a natural deduction proof and its mimp-graph translation}
\label{case3i}
\end{figure}

Let us show in Figure~\ref{case3i} an instance of the eliminating a maximal
formula in tree form. Note that this example shows the reason why
essentially our (weak) normalization theorem is directly a strong
normalization theorem. The formula $\beta\to\gamma$ is not a maximal formula
before a reduction is applied to eliminate the maximal formula $%
\alpha\to(\beta\to\gamma)$. This possibility of having hidden maximal
formulas in Natural Deduction is the main reason to use more sophisticated
methods whenever proving strong normalization. In mimp-graphs there is no
possibility to hide a maximal formula because all formulas are represented
only once in the graph. In this graph $\beta\to\gamma$ is already a maximal
formula. We can choose to remove any of the two maximal formulas. If $%
\beta\to\gamma$ is chosen to be eliminated, by the mimp-graph normalization
procedure, its reduction eliminates the $\alpha\to(\beta\to\gamma)$ too. On
the other hand, the choice of $\alpha\to(\beta\to\gamma)$ to be reduced only
eliminates itself. In any case the number of maximal formula decreases.

We shall construct the normalization proof for mimp-graphs. This proof is
guided by the normalization measure. That is, the general mechanism from the
proof determines that a given mimp-graph $G$ should be transformed into a
non-redundant mimp-graph by applying of reduction steps and at each
reduction step the measure must be decreased. The normalization measure will
be the number of maximal formulas in the mimp-graph.

Also note a following important observation concerning F-minimal mimp
representations (see Theorem \ref{theo1}). Since F-minimal mimps can have at most one
occurrence of hypotheses $\alpha$ and/or $\beta$, every proper reduction
step will diminish the size of deduction. Hence the size of the graph (= the
number of nodes) can serve as another inductive parameter, provided that the
normalization is being applied to F-minimal mimp-graph representations.

\begin{theorem}[Normalization]\label{theo2}
Every mimp-graph $G$ can be reduced to a normal mimp-graph $G^{\prime }$
having the same hypotheses and conclusion as $G$. Moreover, for any standard
tree-like natural deduction $\Pi$, if $G:=G_{\Pi}$ (the F-minimal
mimp-like representation of $\Pi$, cf. Theorem \ref{theo1}), then the size of $%
G^{\prime }$ does not exceed the size of $G$, and hence also $\Pi $.
\end{theorem}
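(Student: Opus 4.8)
The plan is to prove the Normalization Theorem in two coordinated parts, following the structure of the measure-based argument sketched just before the statement. First I would establish the existence of a terminating reduction sequence, and then I would track the size bound under the F-minimality hypothesis.

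For the first part, the approach is to induct on the normalization measure defined as the number of maximal formulas in $G$ (in the sense of Definition~\ref{maxformula}). If $G$ has no maximal formula, it is already normal and we take $G'=G$. Otherwise, I would select a maximal formula $m$ and apply the elimination operation of Definition~\ref{elimina}, invoking Proposition~\ref{prop} to guarantee that the reordering step yields a genuine mimp-graph. The essential claim to verify is that \emph{each} elimination step strictly decreases the count of maximal formulas. The crucial point highlighted by the authors (via Figure~\ref{case3i}) is that, because each formula occurs exactly once as a formula node, there are no \emph{hidden} maximal formulas: when we eliminate the maximal formula built from $\to$I$_i$/$\to$E$_l$, we remove the rule nodes $\to$I$_i$, $\to$E$_l$, $H_u$ and (possibly) the shared formula node $\to_q$, and no \emph{new} introduction-then-elimination pair can be created by the reordering. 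I would argue this by checking that the reordering of Definition~\ref{defReorder} only reassigns inferential-order indices and never inserts a fresh $\to$I immediately above a $\to$E on the same formula node. Hence the measure strictly drops, and the induction hypothesis applies to the reduced graph, giving a normal $G'$. I would also confirm that elimination preserves the hypothesis set and the conclusion: the conclusion edge to $C$ is re-routed to $\beta_n$, and discharged hypotheses $H_u$ that disappear were bound internally, so the open hypotheses are unchanged.

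For the second part, under the assumption $G:=G_\Pi$, I would invoke the observation recorded immediately before the theorem: an F-minimal mimp-graph has at most one occurrence of each hypothesis formula node, so every proper reduction step removes nodes and edges without ever duplicating a subgraph. Concretely, each elimination step of Definition~\ref{elimina} only \emph{deletes} nodes (the rule nodes $\to$I$_i$, $\to$E$_l$, $H_u$, and conditionally the formula node $\to_q$) and re-routes finitely many edges; it never copies a subproof. This is precisely where the mimp-graph formalism diverges from tree-like natural deduction, where eliminating a maximal formula can require replicating $\Pi_1$ or $\Pi_2$ once per occurrence of the discharged hypothesis. Therefore the node count is non-increasing along the whole reduction sequence, giving $|G'|\le|G|$; combining this with condition~1 of Theorem~\ref{theo1}, namely $|G_\Pi|\le|\Pi|$, yields $|G'|\le|\Pi|$ as claimed.

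The main obstacle I anticipate is the verification inside the first part that no reduction can create a new maximal formula, since this is what upgrades the counting argument into a genuine proof that the measure decreases monotonically. In the tree setting this fails (the ``hidden redex'' phenomenon of $\beta\to\gamma$ in Figure~\ref{case3i}), and the whole argument hinges on the single-occurrence property of formula nodes together with requirement~4 of Definition~\ref{elimina}, which demands that any branch separated from the inferential order be re-insertable into the following branch. I would need to argue carefully that this insertability condition is always satisfiable after an elimination, so that the reordering of Definition~\ref{defReorder} is well-defined and the resulting structure genuinely satisfies the six node-type conditions of Lemma~\ref{lem-PG}. Once that is secured, strong normalization follows as a corollary, because the measure decreases under \emph{any} choice of maximal formula to eliminate, not merely under a fixed strategy.
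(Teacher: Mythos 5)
Your proposal follows essentially the same route as the paper's own proof: induction on the measure $Nmax(G)$ (the number of maximal formulas), with each elimination step of Definition~\ref{elimina} checked by inspection to preserve the hypotheses and conclusion and to create no new maximal formulas, strong normalization falling out because the measure drops under any choice of redex, and the size bound coming from the F-minimality observation stated just before the theorem. If anything, your write-up is more explicit than the paper's proof about the verification obligations (that the reordering of Definition~\ref{defReorder} creates no new introduction--elimination pairs, and that requirement~4 of Definition~\ref{elimina} is satisfiable), which the paper dispatches with a brief appeal to case inspection.
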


\begin{remark}
The second assertion sharply contrasts to the well-known exponential speed-up of
standard normalization. Note that the latter is a consequence of the
tree-like structure of standard deductions having different occurrences of
equal hypotheses formulas, whereas all formulas occurring in F-minimal
mimp-like representations are pairwise distinct.
\end{remark}

\begin{proof}
This characteristic of preservation of the premises and conclusions of the 
derivation is proved naturally. Through an inspection of each elimination 
of maximal formula is observed that the reduction step (see Definition~\ref{elimina}) 
of the mimp-graph does not change the set of premises and conclusions 
(indicated by the delimiter nodes $H$ and $C$) of the derivation that is being reduced.

In addition, the demonstration of this theorem has two primary requirements. 
First, we guarantee that through the elimination of maximal formulas in the 
mimp-graph, cannot generate more maximal formulas.
The second requirement is to guarantee that during the normalization process, 
the normalization measure adopted is always reduced. 

The first requirement is easily verifiable through an inspection of each case 
in the elimination of maximal formulas. Thus, it is observed that no case 
produces more maximal formulas. The second requirement is established through 
the normalization procedure and demonstrated through an analysis of existing 
cases in the elimination of maximal formulas in mimp-graphs. To support this 
statement, it is used the notion of normalization measure, we adopt as measure 
of complexity (induction parameter) the number of maximal formulas $Nmax(G)$. 
Besides, as already mentioned, working with F-mimimal mimp-graph representations we 
can use as optional inductive parameter the ordinary size of mimp-graphs.   
\end{proof}

\textbf{Normalization Process}

We know that a specific mimp-graph $G$ can have one or more maximal formulas
represented by $M_1, ..., M_n$. Thus, the normalization procedure is
described by the following steps:

\begin{enumerate}
\item  Choosing a maximal formula represented by $M_k$.

\item  Identify the respective number of maximal formulas $Nmax(G)$.

\item  Eliminate maximal formula $M_k$ as defined in Definition~\ref{elimina}.

\item  In this application one of the following three cases may occur:

\begin{description}
\item[a)]  The maximal formula is removed.
\item[b)]  The maximal formula is removed but the formula node is
maintained,  hence $Nmax(G)$ is decreased;
\item[c)]  All maximal formulas are removed.
\end{description}

\item  We repeat this process until the normalization measure $Nmax$ is
reduced  to zero and $G$ becomes a normal mimp-graph.
\end{enumerate}

Since the process of the eliminating a maximal formula on mimp-graphs always
ends in the elimination of at least one maximal formula, 
and with the decrease in the number of vertices of the graph, we can say that 
this normalization theorem is directly a strong normalization theorem. 

\section{Conclusions and Related works}

This representation of a proof in mimp-graph requires fewer nodes than the 
tree and the list representation of proofs. For the case of lists, it is enough to 
observe that a sub-formula of a formula is already in any graph representation of it. 
If both take part in the proof the size is smaller than in the mentioned 
representations. The ability to represent any Natural Deduction proof is preserved.
Another important advantage of a compact representation of graphs is that it 
allows to deduce some structural properties of proof-graphs, for example based on 
a mimp-graph, it is easy to see an upper bound in the length of the reduction sequence 
to obtain a normal proof. It is the number of maximal formulas.

There is some previous research concerning the use of graphs to represent proofs 
developed on connections to substructural logics as Linear Logic, see \cite{Girard1} 
and \cite{Girard2} for example. The main motivation of this just mentioned 
investigations is to provide a sound way of representing Linear Logic proofs without 
dealing with unique labeling and complicated rules for relabeling and discharging 
mechanisms need to represent Linear Logic proofs as trees in Natural Deduction styles 
as well as in Sequent Calculus. 
Proof-nets were such representations and a syntactical criteria on the possible paths 
on them were considered as a soundness criteria for a proof-graph to be a proof-net. 
Proof-nets have a cut-rule quite similar to the cut in Sequent Calculus. 
For the Multiplicative fragment of Classical Linear Logic, there is a linear time 
cut-elimination theorem. However, when the additive versions of the connectives are 
considered, the usual complexity of the cut-elimination raises up again. Linear Logic 
is an important Logic whenever we consider the study of a concurrent computational 
systems and its semantics strongly uses concurrency theory concepts. Our investigations, 
on the other hand, is not motivated by proof-theoretical semantics~\footnote{%
The name that nowadays it is used to denote the kind of research pioneered by Jean-Yves Girard}. 
From the purely proof-theoretical point of view, we use graphs to reduce the redundancy 
in proofs in such a way that we do not allow hidden maximal formulas in our graph 
representation of a Natural Deduction proof. As a secondary motivation, this is a 
preliminary step into investigating how a theorem prover based on graphs is more 
efficient than usual theorem provers. Our proof-graphs represent the formulas 
themselves in a way that each subformula is a unique node in the graph. 
Proof nets do not represent formulas in this way.

\bibliographystyle{eptcs}
\bibliography{ref}



\end{document}